\begin{document}
\title{On Stochastic Error and Computational Efficiency of the Markov Chain Monte Carlo Method}


 \author[Li J et.~al.]{Jun Li\affil{1}\comma\corrauth, Philippe Vignal\affil{1,2}, Shuyu Sun\affil{3} and Victor M. Calo\affil{1,3}}
 \address{\affilnum{1}\ Center for Numerical Porous Media, King Abdullah University of Science and Technology, Thuwal, Saudi Arabia. \\
                \affilnum{2}\ Material Science and Engineering, King Abdullah University of Science and Technology, Thuwal, Saudi Arabia. \\
                \affilnum{3}\ Applied Mathematics and Computational Science, Earth Sciences and Engineering, King Abdullah University of Science and Technology, Thuwal, Saudi Arabia}
 \emails{{\tt lijun04@gmail.com} (J.~Li), {\tt philippe.vignal@kaust.edu.sa} (P.~Vignal), {\tt shuyu.sun@kaust.edu.sa} (S.~Sun), {\tt victor.calo@kaust.edu.sa} (V.M.~Calo)}


\begin{abstract}
In Markov Chain Monte Carlo (MCMC) simulations, the thermal equilibria quantities are estimated by ensemble average over a sample set containing a large number of correlated samples. These samples are selected in accordance with the probability distribution function, known from the partition function of equilibrium state. As the stochastic error of the simulation results is significant, it is desirable to understand the variance of the estimation by ensemble average, which depends on the sample size (i.e., the total number of samples in the set) and the sampling interval (i.e., cycle number between two consecutive samples). Although large sample sizes reduce the variance, they increase the computational cost of the simulation. For a given CPU time, the sample size can be reduced greatly by increasing the sampling interval, while having the corresponding increase in variance be negligible if the original sampling interval is very small. In this work, we report a few general rules that relate the variance with the sample size and the sampling interval. These results are observed and confirmed numerically. These variance rules are derived for the MCMC method but are also valid for the correlated samples obtained using other Monte Carlo methods. The main contribution of this work includes the theoretical proof of these numerical observations and the set of assumptions that lead to them.
\end{abstract}

\ams{76T99, 82B05, 82B80, 62M05, 65C40}
\keywords{phase coexistence, Gibbs ensemble, molecular simulation, Markov Chain Monte Carlo method, variance estimation, blocking method.}

\maketitle

\section{Introduction}
\label{s:intro}
The Monte Carlo method has successfully been applied to a wide variety of applications, which include the solution of integral equations by the Markov Chain Monte Carlo (MCMC) method~\cite{Metropolis1953}, the Boltzmann equation by the Direct Simulation Monte Carlo (DSMC) method~\cite{Bird1994} and stochastic partial differential equations by a multilevel Monte Carlo method~\cite{Barth2012}. We focus our discussion on the MCMC method. An essential part of many scientific problems is to evaluate an integral in a high-dimensional space $\vec X$ with the integrand containing a weighting function $f(\vec X)$ (probability distribution function of the configuration $\vec X$) which is large in some area but close to zero almost everywhere else. The computational cost of evaluating the integral by conventional quadrature schemes is prohibitive since it demands a large number of quadrature points inside a high-dimensional space. This integral can be estimated by the average value of the integrand over a large  number of configurations sampled inside the domain randomly, independently and uniformly, using the Monte Carlo (MC) method. Metropolis and Ulam~\cite{Metropolis1987} (see~\cite{Frenkel2002}) dubbed this simulation method \textit{Monte Carlo} since it uses a large number of random fractions generated by a computer. The accuracy of the MC method can be improved by using the importance sampling scheme~\cite{Marshall1956}, which generates configurations non-uniformly but according to an artificially selected probability density function $g(\vec X)$, which is close to $f(\vec X)$, so that more probability mass is assigned to those configurations with higher probability~\cite{Frenkel2002,Marshall1956,Liu2001}. In order to ensure the sampled configurations remain independent, the process demands the primitive function $G(\vec X)$ of $g(\vec X)$ and its inverse function $\vec X(G)$. Unfortunately, it is not feasible to find such $g(\vec X)$ in most applications of interest. Rather than generating independent configurations, the Metropolis method~\cite{Metropolis1953}, which still uses the importance sampling idea, generates (possibly) correlated configurations from the original $f(\vec X)$ by a Markov chain. The Markov chain makes the algorithm simple and universal. This method is known as MCMC method~\cite{Liu2001}. Since the samples are correlated with each other, the variance of MCMC simulations with the same sample size is larger than the variance of the MC simulations using independent configurations. Additionally, the variance of MCMC simulations usually depends on the sampling interval.

The use of averages is common in scientific studies and many quantities related to thermal equilibria are averaged properties, measured in real experiments over large numbers of particles and long time intervals. If the ergodic hypothesis applies to the system at the molecular level~\cite{Frenkel2002}, we can compute those quantities by ensemble averaging instead of time averaging using the probability distribution function $f(\vec X)$, known from the partition function of the equilibrium state, an idea stemming from statistical mechanics. The MCMC method is a powerful tool based on ensemble averaging idea that can be used to calculate the quantities related to the thermal equilibrium state.

A system with fixed particle number $N$, volume $V$, and temperature $T$ can be described by a canonical ensemble (constant-$NVT$), with the probability distribution function containing only the coordinates of the $N$ particles as independent variables. This description is valid for systems where the quantities of interest only depend explicitly on the location of all the particles. MCMC simulations of this system apply a random sequence of displacements to randomly selected particles. This random selection of particles and displacements is known as a trial move. The sample sequence that it forms generates a (correlated) Markov chain. The correlation degree of this sequence depends on the maximal random displacement applied, that is, the step size that determines the acceptance rate of the trial move.

Most real experiments are carried out under conditions of controlled pressure and temperature. Thus, the isobaric-isothermal ensemble (constant-$NPT$) is widely used in MCMC simulations where the particle location and the volume of the system are randomly modified to visit all possible configurations according to their respective probabilities. Here, the step size of the volume-changing trial move also influences the correlation degree of the successive configurations.

In adsorption studies where the chemical potential $\mu$ is fixed, instead of the particle number $N$, the grand-canonical ensemble (constant-$\mu VT$) is used to calculate the average particle number. The corresponding MCMC method includes a displacement trial move, as well as a trial insertion and removal of particles, with a step size usually fixed to one particle. That is, only one particle is tentatively inserted or removed from the volume each time. The acceptance ratio of particle insertion and removal is very small and thus results in a high-correlation degree of the related successive configurations. This correlation degree cannot be reduced because the step size is already the minimal divisible unit, one particle.

For the simulation of coexisting phases, important in many engineering applications, the MCMC algorithms based on the traditional ensembles described above suffer some important drawbacks. For example, limited computational resources imply that the number of particles used to represent the phase-coexistence system is relatively small. Thus, a large fraction of all particles used reside in the vicinity of the interface between phases. This induces a bias towards the interfacial properties when ensemble averages are computed, rather than including a balanced representation of the bulk phases.

In the literature several improvements to the traditional sampling have been proposed. In~\cite{Panagiotopoulos1987}, a Gibbs-$NVT$ MCMC method, where the total particle number, total volume, and temperature are fixed, was proposed to alleviate these algorithmic restrictions. This Gibbs-$NVT$ scheme combines $NVT$, $NPT$ and $\mu VT$ ensembles for simulating coexisting phases. This combination skillfully avoids the interface predominance by introducing two subsystems modeled as separate boxes. This model allows particles to swap from one phase (box) to the other, while neglecting the potential energy between particles from different phases. Additionally, volume exchanges are allowed between the two boxes while the total volume is conserved. The acceptance ratio of particle swap is very small, as was the case for the grand-canonical ensemble simulation. This limitation can be particularly severe when the density of the dense phase is relatively very high and becomes important when modeling deposition and separation of dense liquids and solids. This drawback is avoided in the Gibbs-Duhem integration method~\cite{Kofke1993a,Kofke1993b,Agrawal1995}. Nevertheless, this integration scheme needs the initial point on the coexistence curve, and thus relies on the use of another method that can provide this initial point. If one of the coexisting phases is a crystal, the method proposed in~\cite{Tilwani1999} improves the acceptance probability of exchanging particles.
\\
\indent The MCMC method based on Gibbs ensemble has successfully been applied to problems related to water systems~\cite{Errington1998}, as well as oil production and processing~\cite{Smit1995,Martin1998,Nath1998,Errington1999,Potoff1999,Ungerer2006,Hajipour2011}. In these applications, the solubility of hydrogen sulfide and other corrosive components in the gas-hydrocarbon mixtures is important data. Nevertheless, this solubility is poorly understood due to the lack of experimental results. In Gibbs-$NVT$ ensemble simulations of two coexisting phases, there are three kinds of trial moves: particle displacement, volume exchange, and particle swap. In order to reduce the variance of the simulation results by decreasing the correlation degree of configurations, we adjust the step size for the first two trial moves. A discussion of the relationship between the variance and the step size of particle displacement is given in~\cite{Frenkel2002} but it is usually difficult to obtain a general rule for such a relationship. Recently~\cite{Li2011}, the liquid-vapor coexistence of methane was simulated by the Gibbs-$NVT$ MCMC method. Then, the variation of mole fraction with pressure in a two-component system at a phase coexistence state was studied with the Gibbs-$NPT$ MCMC method proposed in~\cite{Panagiotopoulos1988}, where the total particle number, pressure, and temperature are fixed.
\\
\indent When Markov chain evolution is used for Monte Carlo simulations, it is not advisable to sample the system for the quantities of interest after each cycle, namely each trial move. Saving a large number of samples to reduce the stochastic noise contained in the samples requires a large amount of memory if the correlation is high; instead, the system is sampled at intervals (sampling interval). The larger the sampling interval is, the smaller the correlation degree of the collected samples will be. The same applies to the variance with fixed sample size (i.e., the total number of sampled cycles). The computational time is almost proportional to the product of the number of samples collected and the sampling interval. Thus, increasing the sampling interval either increases the CPU time when keeping the number of samples constant, or increases the variance of the results when keeping the CPU time constant. Nevertheless, our simulation results show that a good trade-off between the CPU time and memory usage can be achieved. In this paper, we describe the Gibbs-$NVT$ MCMC method and employ it to model the coexisting phases of a Lennard-Jones (L-J) fluid. To make the problem tractable for the following theoretical analysis, we analyze the influence of the sampling interval and sample size on the variance of the simulation results on an idealized fluid, rather than the L-J fluid system. Finally, a general theoretical analysis is proposed to justify and prove some of the empirical observations and rules proposed.
\goodbreak
\section{The Markov Chain Monte Carlo Method}\label{s:MC method}
Let the following integral define the expected value of $A$~\cite{Frenkel2002}:
\begin{equation}\label{eq:<A>=}
    \left<A\right>=\dfrac{\int_{\Omega_{\vec X}}f(\vec X)A(\vec X){\rm d}\vec X}{\int_{\Omega_{\vec X}}f(\vec X){\rm d}\vec X},
\end{equation}
where $\vec X$ is a high-dimensional vector and the formulas of $A(\vec X)$ and $f(\vec X)$ are given. To compute $\left<A\right>$, it is convenient to use the MCMC method to generate correlated configurations $\vec X_i$ after each cycle with a probability density proportional to $f(\vec X)$. Unknown constant coefficients contained in $f(\vec X)$ are canceled in the MCMC computation process. The system is sampled at intervals during the simulation and the configuration $\vec X_j$ at each sampled cycle is used to estimate the expected value $\left<A\right>$ by the average value $\overline{A}=\frac{1}{n}\sum_{j=1}^{n}A(\vec X_j)$ over $n$ samples.
\subsection{Basic algorithm of MCMC method}\label{ss:basic algorithm}
The algorithm of MCMC method~\cite{Metropolis1953} for solving the general integral~\eqref{eq:<A>=} can be summarized as follows:
\begin{enumerate}
\item Initialization of configuration $\vec X$;\label{step1}
\item For each cycle:
  \begin{enumerate}
  \item Apply trial move changing $\vec X$ to $\vec X^\prime$;\label{step2a}
  \item Apply acceptance criterion to the new $\vec X^\prime$;\label{step2b}
  \end{enumerate}
\item Sample the system at regular intervals (after every $d$ cycles);
\item Stop after getting sufficient samples for analysis. \label{step4}
\end{enumerate}
The initial configuration can be selected randomly from within the domain $\Omega_{\vec X}$ of the definition of the configuration space. The Markov chain is generated by randomly modifying the current configuration $\vec X$ into $\vec X^\prime$ using the trial move algorithm.

The algorithm outlined in steps~\ref{step1} to~\ref{step4} should satisfy the ergodicity and time-reversal conditions. The ergodicity condition requires that from the current configuration $\vec X$ it is possible to visit any $\vec X^\prime\in\Omega_{\vec X}$ by a limited number of trial moves. The time-reversal condition requires that the probability for the current configuration to change back to its previous state is larger than zero. The probability density of the trial move event $(\vec X\to \vec X^\prime)$ is denoted by $\alpha(\vec X\to \vec X^\prime)$. Any new configuration $\vec X^\prime$ generated in step~\ref{step2a} will be accepted or rejected in step~\ref{step2b}  based on the following acceptance criterion: $\vec X^\prime$ is accepted if the random number distributed uniformly inside [0,1], $Rf$, is less than \[\dfrac {\alpha(\vec X^\prime\to \vec X)f(\vec X^\prime)}{\alpha(\vec X\to \vec X^\prime)f(\vec X)},\] or rejected otherwise. This means that the acceptance probability is equal to \[acc(\vec X\to \vec X^\prime)=\min\left[1,\dfrac {\alpha(\vec X^\prime\to \vec X)f(\vec X^\prime)}{\alpha(\vec X\to \vec X^\prime)f(\vec X)}\right].\] This selection for the acceptance probability is based on the detailed balance condition for the equilibrium state, which can be stated as \[f(\vec X)\alpha(\vec X\to \vec X^\prime)acc(\vec X\to \vec X^\prime)=f(\vec X^\prime)\alpha(\vec X^\prime\to \vec X)acc(\vec X^\prime\to \vec X),\] and also on the fact that \[\dfrac{\min\left[1, \beta\right]}{\min\left[1, \beta^{-1}\right]}\equiv\beta.\] The algorithm can be simplified significantly by using symmetric trial moves such that the probability density of the trial move from $\vec X$ to $\vec X^\prime$ is equal to the probability density of the reverse move, that is, $\alpha(\vec X\to \vec X^\prime)=\alpha(\vec X^\prime\to \vec X)$. The detailed balance condition is a sufficient but not a necessary requirement, while in~\cite{Manousiouthakis1999} the weaker ``balance condition'' was shown to be a necessary and sufficient requirement.

Samples are collected in step 3 after the simulation has reached the statistical steady state, that is, after an initial transitional period. The quantities of interest are estimated from samples collected every $d$ cycles.
\subsection{MCMC algorithm based on Gibbs-$NVT$ ensemble}\label{ss:Gibbs-NVT algorithm}

\begin{figure}
\centering
  \subfloat[particle displacement: $\vec s_i^\prime=\vec s_i+\Delta s(Rf_1-0.5, Rf_2-0.5, Rf_3-0.5)$][particle displacement: \\ $\vec s_i^\prime=\vec s_i+\Delta s(Rf_1-0.5, Rf_2-0.5, Rf_3-0.5)$]{\includegraphics[width=0.6\textwidth]{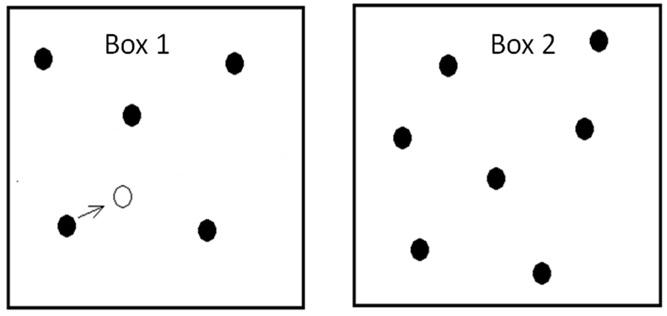}} \\
  \subfloat[volume exchange: $\ln\dfrac{V_1^\prime}{V-V_1^\prime}=\ln\dfrac{V_1}{V-V_1}+\Delta V(Rf_4-0.5), V_2^\prime=V-V_1^\prime$][volume exchange: \\ $\ln\dfrac{V_1^\prime}{V-V_1^\prime}=\ln\dfrac{V_1}{V-V_1}+\Delta V(Rf_4-0.5), V_2^\prime=V-V_1^\prime$]{\includegraphics[width=0.6\textwidth]{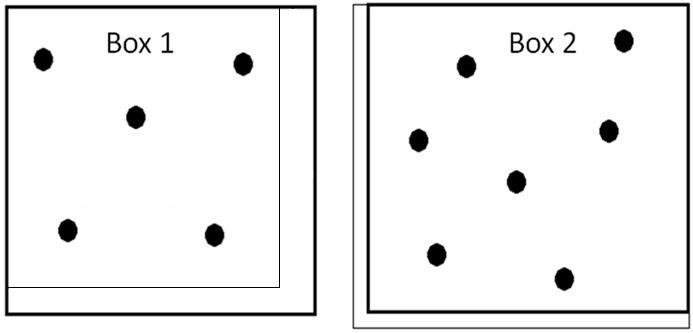}}\\
  \subfloat[particle swap between boxes: $N_1^\prime=N_1\pm1, N_2^\prime=N-N_1^\prime$][particle swap between boxes:\\ $N_1^\prime=N_1\pm1, N_2^\prime=N-N_1^\prime$]{\includegraphics[width=0.6\textwidth]{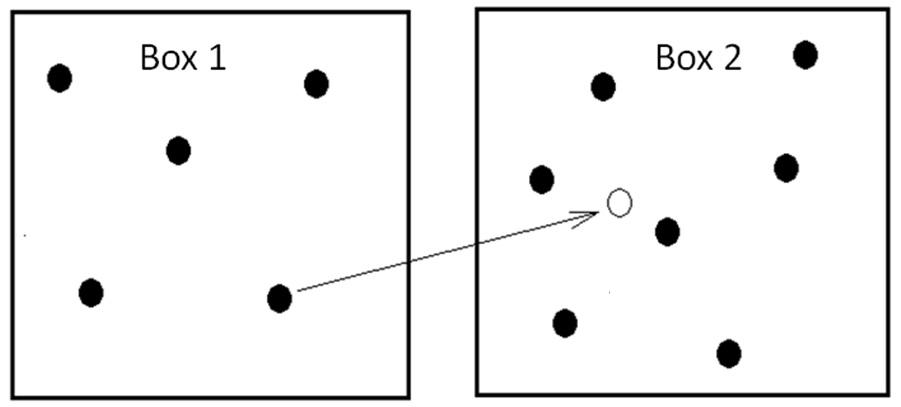}}
  \caption{Schematic model for trial moves with the Gibbs-$NVT$ MCMC method.}
  \label{fig:trialmove}
\end{figure}

We discuss single component systems and assume that each molecule is modeled as a single particle. In the Gibbs-$NVT$ ensemble~\cite{Panagiotopoulos1987}, as described in~\cite{Frenkel2002}, the probability density distribution function $f$ and the related partition function $Q_{\rm G}$ are expressed as
\begin{equation}\label{eq:f=}
    f(N_1,V_1,\vec S_1,\vec S_2)
    =\dfrac{V_1^{N_1}(V-V_1)^{N-N_1}\exp\left[-\beta \left(U_1+U_2\right)\right]}{Q_{G}(N,V,T)V\lambda^{3N}N_1!(N-N_1)!},
\end{equation}
and
\begin{equation}\label{eq:Q=}
    Q_{\rm G}(N,V,T)=\sum_{N_1=0}^N\int_0^V\int\int f(N_1,V_1,\vec S_1,\vec S_2){\rm d}\vec S_1{\rm d}\vec S_2{\rm d}V_1,
\end{equation}
where $T$ is the fixed temperature of both boxes, $V$ is the fixed total volume, $V_1$ is the volume occupied by box 1, $N$ is the fixed total particle number, $N_1$ is the particle number inside box 1, $\vec S_1$ and $\vec S_2$ are high-dimensional vectors that contain the normalized positions $\vec s_i$ of all particles inside boxes $1$ and $2$, respectively, where the normalization parameters are each of the box sizes, which are $V_1^{1/3}$ and $(V-V_1)^{1/3}$, $\lambda=h/\sqrt{2\pi m/\beta}$ is the thermal de Broglie wavelength, $h$ is the Planck constant, $m$ is the molecular mass, $\beta=1/(k_{\rm B}T)$, $k_{\rm B}$ is the Boltzmann constant, and $U_1=U_1(\vec S_1,V_1)$ is the total potential energy of box 1, namely a summation of pair potential energy $u_{ij}$ contributed by particles $i$ and $j$ inside box 1. The probability density distribution function, given in Eq.~\eqref{eq:f=}, and the related partition function, given in Eq.~\eqref{eq:Q=}, are obtained after completing the integration with respect to the momentum variables. Here, the configuration $\vec X$ consists of $N_1$, $V_1$, $\vec S_1$ and $\vec S_2$. In general, we can take $U_1$ as a function $U_1(N_1,V_1,\vec S_1,\vec S_2)$, although it only depends on $N_1$, $V_1$, and $\vec S_1$. Since we have formula~\eqref{eq:f=} for $f(N_1,V_1,\vec S_1,\vec S_2)$, the expected value $\left<U_1\right>$ can be defined as $\left<A\right>$ using Eq.~\eqref{eq:<A>=}. Similarly, we can define the expected values of $\left<U_2\right>$, $\left<p_1\right>$, $\left<p_2\right>$, $\left<V_1\right>$, $\left<V-V_1\right>$, $\left<\rho_1\right>$ and $\left<\rho_2\right>$ using the following definitions of their transient values as functions of $N_1$, $V_1$, $\vec S_1$ and $\vec S_2$. In the MCMC simulations, the successive configurations $\vec X_i$ are generated by a Markov chain according to $f(N_1,V_1,\vec S_1,\vec S_2)$, and the samples of the quantities of interest can be determined from the configurations $\vec X_j$ at the sampled cycles.

The parameter $Q_{G}(N,V,T)V\lambda^{3N}$ in the denominator of Eq.~\eqref{eq:f=} is constant and avoided in the MCMC applications since only the ratio ${f(\vec X^\prime)}/{f(\vec X)}$ is computed to determine $acc(\vec X\to \vec X^\prime)$, as discussed in Section~\ref{ss:basic algorithm}. During the simulation process, $N_1$, $V_1$, $\vec S_1$ and $\vec S_2$ are randomly selected in each cycle and tentatively changed by the corresponding {\textit {symmetric}} trial moves (see Fig.~\ref{fig:trialmove}, where $\Delta s$ and $\Delta V$ are the corresponding step sizes). We compute $acc(\vec X\to \vec X^\prime)=\min\left[1,{f(\vec X^\prime)}/{f(\vec X)}\right]$ using the following formula to avoid the evaluation of $Q_{G}(N,V,T)V\lambda^{3N}$:
\begin{equation}\label{eq:simple f=}
    f(N_1,V_1,\vec S_1,\vec S_2)
    \propto\dfrac{V_1^{N_1}(V-V_1)^{N-N_1}\exp\left[-\beta \left(U_1+U_2\right)\right]}{N_1!(N-N_1)!}
\end{equation}

For Lennard-Jones (L-J) fluids, we have:
\begin{equation}\label{eq:uLJ=}
    u_{ij}=u_{\rm L-J}(r)=4\epsilon\left[\left(\dfrac{\sigma}{r}\right)^{12}-\left(\dfrac{\sigma}{r}\right)^{6}\right]
\end{equation}
where $\epsilon$ is the depth of the potential well, $\sigma$ is the finite distance at which the pair potential energy is zero, and $r=|\vec r_i-\vec r_j|$, where $\vec r_i$ are the coordinates of particle $i$, computed using the normalized $\vec s_i$ as well as the size of the box concerned. To simplify our computations, we replace Eq.~\eqref{eq:uLJ=} by a truncated potential such that
\begin{equation}\label{eq:utrunc=}
    u_{ij}=u^{\rm cut}(r)=
    \begin{cases}u_{\rm L-J}(r), & r\leq r_c; \\
    0, & r>r_c.
    \end{cases}
\end{equation}
An explicit summation of $u_{ij}$ under periodic boundary conditions takes into consideration the infinite periodic images of all particles. Additionally, a correction term due to the contributions beyond the cutoff distance $r_c$ is added to determine the total potential energy for each box. Taking box~1 as an example, the correction for the total energy $U_1$ is~\cite{Frenkel2002}
\begin{equation}\label{eq:Utail=}
    U_1^{\rm tail}=\dfrac{8\pi N_1^2}{3V_1}\epsilon\sigma^3\left[\dfrac{1}{3}\left(\dfrac{\sigma}{r_{c,1}}\right)^9-\left(\dfrac{\sigma}{r_{c,1}}\right)^3\right]
\end{equation}
where $r_{c,1}$ is the cutoff distance for box 1. We use $r_{c,1}=0.45V_1^{1/3}$ and $r_{c,2}=0.45(V-V_1)^{1/3}$, which implies that boxes with different volumes have different cutoff distances. The total potential energy after the tentative trial move at each cycle is computed to determine $f(\vec X')$. The transient pressure, which is computed only at the sampled cycles, can be calculated using the following definition~\cite{Frenkel2002}:
\begin{equation}\label{eq:p=}
    p_1=\dfrac{N_1k_\text{B}T}{V_1}+\dfrac{1}{3V_1}\dfrac{1}{2}{\sum_{i,j,\vec n}}'\left(-\dfrac{{\rm d}u}{{\rm d}r}r\right)
\end{equation}
where the factor 1/2 is used to correct for double counting of the pair-wise contributions and $\vec n$ is a vector of three integers ranging from $(-\infty, \infty)$ through which we can represent the contributions by the infinite periodic particle images. The truncation of Eq.~\eqref{eq:utrunc=} at the cutoff distance is also applied to the explicit summation of Eq.~\eqref{eq:p=} to limit the number of effective pairs. The correction for pressure $p_1$ due to truncation is~\cite{Frenkel2002}
\begin{equation}\label{eq:ptail=}
    p_1^{\rm tail}=\dfrac{16\pi N_1^2}{3V_1^2}\epsilon\sigma^3\left[\dfrac{2}{3}\left(\dfrac{\sigma}{r_{c,1}}\right)^9-\left(\dfrac{\sigma}{r_{c,1}}\right)^3\right]
\end{equation}

In MCMC simulations, it is convenient to use non-dimensional quantities. The resulting non-dimensional system is defined by the following normalized quantities: number density $\rho_1^*=\sigma^3\rho_1=\sigma^3N_1/V_1$, pressure $p_1^*=p_1\sigma^3/\epsilon$, temperature $T^*=Tk_{\rm B}/\epsilon$, and energy $u_{ij}^*=u_{ij}/\epsilon$.

\subsection{MCMC simulations using Gibbs-$NVT$ ensemble}\label{ss:Gibbs-NVT simulation}

\begin{figure}
\centering
\includegraphics[width=0.5\textwidth]{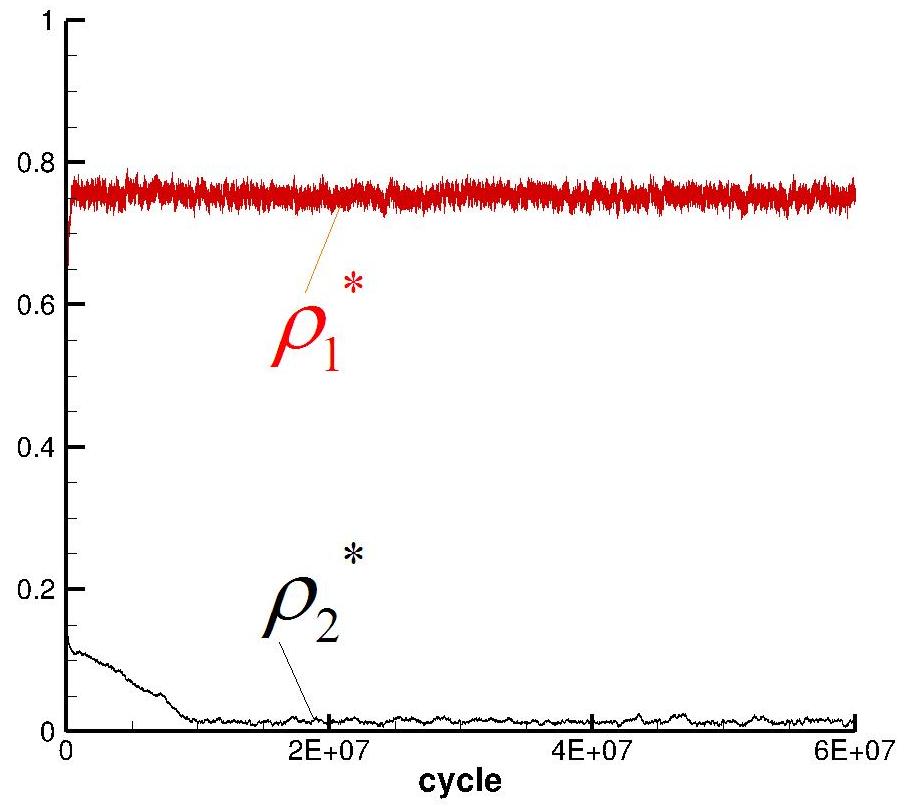}\\
\includegraphics[width=0.5\textwidth]{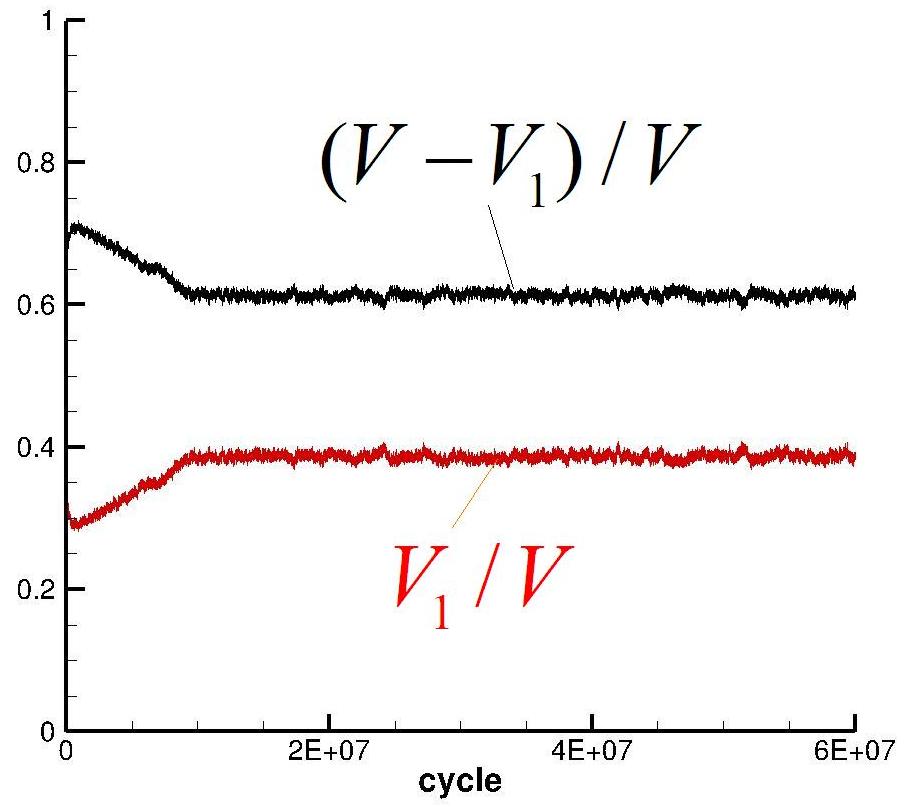}\\
\includegraphics[width=0.5\textwidth]{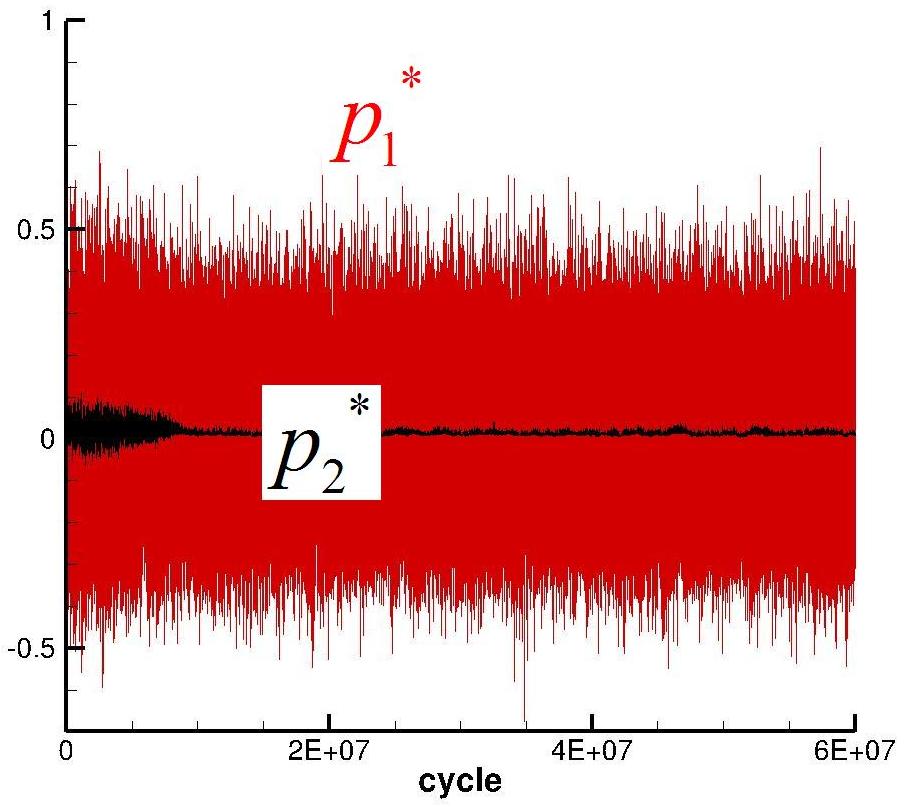}
\caption{Evolution of normalized densities (top), volumes (middle) and pressures (bottom), $T^*=0.9$.}
\label{fig:T=0.9}
\end{figure}

We designed the MCMC code according to the above algorithm based on Gibbs-$NVT$ ensemble and ran the simulations for the phase-coexistence study of a L-J fluid on a Dell workstation (Dell T7500 running Ubuntu 12.04, Intel(R) processor Xeon(R) CPU X5650 @ 2.67GHz, RAM: 47GB). The cutoff distance for the two boxes is fixed at 45\% (smaller than a half) of the corresponding box size, which is modified after each accepted volume exchange trial move. One thousand particles are used in our simulations and the initial normalized density of the two boxes is $\rho_{\rm init}^*=0.3$, unless otherwise stated.

In each cycle, a trial move is applied. It is selected randomly out of three possible cases (displacement move, volume exchange, particle swap, see Fig.~\ref{fig:trialmove} ) that are assigned different probabilities. The probability for selecting the displacement trial move is 0.9, 0.01 for volume exchange and 0.09 for particle swap. After a transitional period (about $L_{\rm init}=2\times10^7$ cycles for the current simulations), we sample the system every $50$ cycles ($d=50$).

The initial values of $\Delta s$ and $\Delta V$ are chosen to be $0.1$ (see Fig.~\ref{fig:adjust,T=0.9}). In order to have the acceptance ratios of the related trial moves be close to user-defined values, the step sizes are modified by an adaptive algorithm (see the source code mentioned in the preface of~\cite{Frenkel2002}) using the collected information. These step sizes are reset at the end of each $L_{\rm adjust}=5\times10^5$ cycles. The adaptive procedure used ensures that, by the completion of the initial $L_{\rm init}$ cycles, the step sizes of the different trial moves are such that the acceptance ratios of those trial moves are approximately equal to the predetermined value (e.g., 0.5 in the current simulations). Once the transitional $L_{\rm init}$ cycles are executed, the step sizes are kept fixed for the remainder of the simulation. Fixing the step sizes ensures the symmetry of the following trial moves.

For $T^*$=0.9, Fig.~\ref{fig:T=0.9} shows that the normalized density, volume, and pressure of the two boxes are converged after the predetermined $L_{\rm init}=2\times10^7$ cycles. Before the $L_{\rm init}$ cycles are complete, the step sizes $\Delta s$ and $\Delta V$ are adjusted, the related achieved acceptance ratios are changed correspondingly, and finally approach the predetermined value of 0.5 as shown in Fig.~\ref{fig:adjust,T=0.9} (left). After $L_{\rm init}$ cycles, the step sizes are fixed to their latest values and the related acceptance ratios fluctuate about 0.5 as desired. Fig.~\ref{fig:adjust,T=0.9} (left) also shows that the acceptance ratio of particle swap between boxes is only about 0.0026 for $T^*=0.9$ because of the very high density of box 1~(see Fig.~\ref{fig:T=0.9}). This situation only worsens as density increases. As discussed in the introduction, this acceptance ratio cannot be improved when $T^*$ is fixed, even though it results in a high-correlation degree of the successive samples. If instead $T^*$ is increased from 0.9 to 1.25, the acceptance ratio of particle swap is increased to about 0.06, as shown in Fig.~\ref{fig:adjust,T=0.9} (right) because the density of the dense phase is decreased. The results of $\rho^*$ for different values of $T^*$ are shown in Fig.~\ref{fig:temp-rho}. They include a comparison with results computed using the equation of state presented in~\cite{Nicolas1979} and MCMC simulations~\cite{Frenkel2002}.

\begin{figure}
\centering
\includegraphics[width=0.49\textwidth]{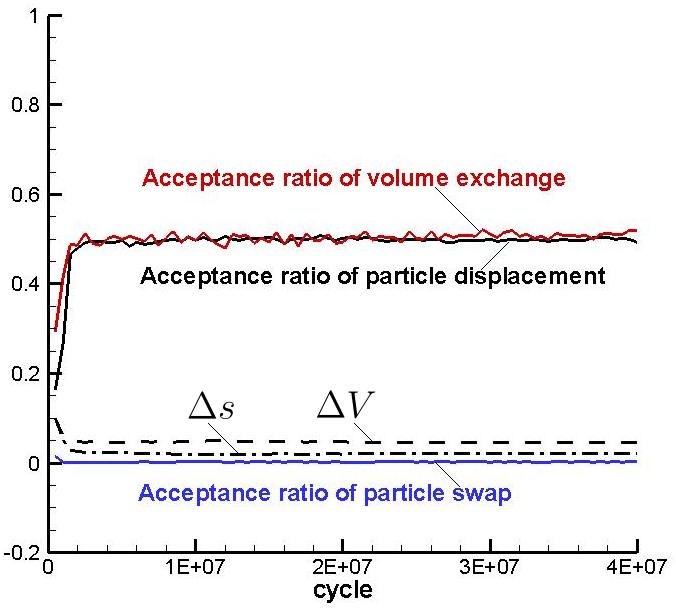}
\includegraphics[width=0.49\textwidth]{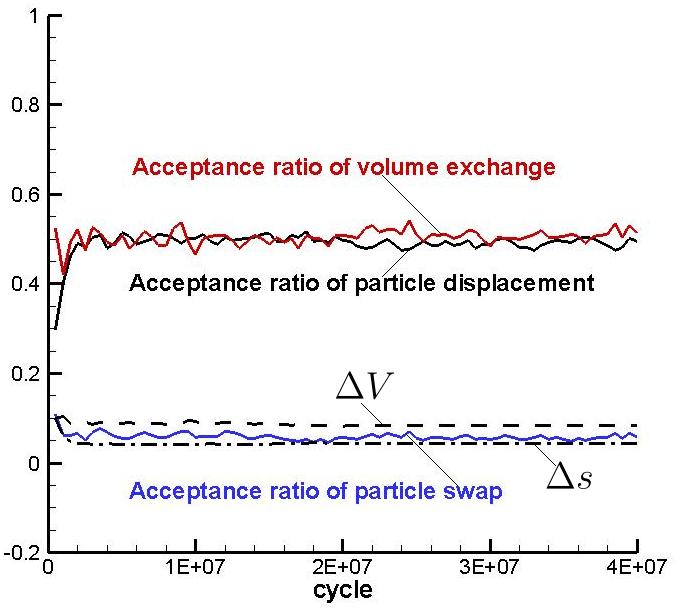}
\caption{Evolution of acceptance ratios and step sizes, $T^*=0.9$ (left) and $T^*=1.25$ (right).}
\label{fig:adjust,T=0.9}
\end{figure}

\begin{figure}
\centering
\includegraphics[width=0.5\textwidth]{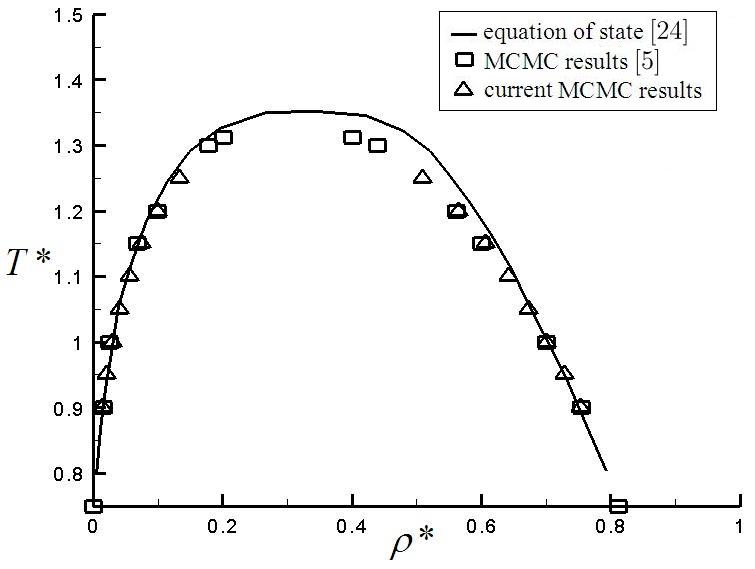}
\caption{Phase diagram of a Lennard-Jones fluid.}
\label{fig:temp-rho}
\end{figure}

\begin{figure}
\centering
\includegraphics[width=0.5\textwidth]{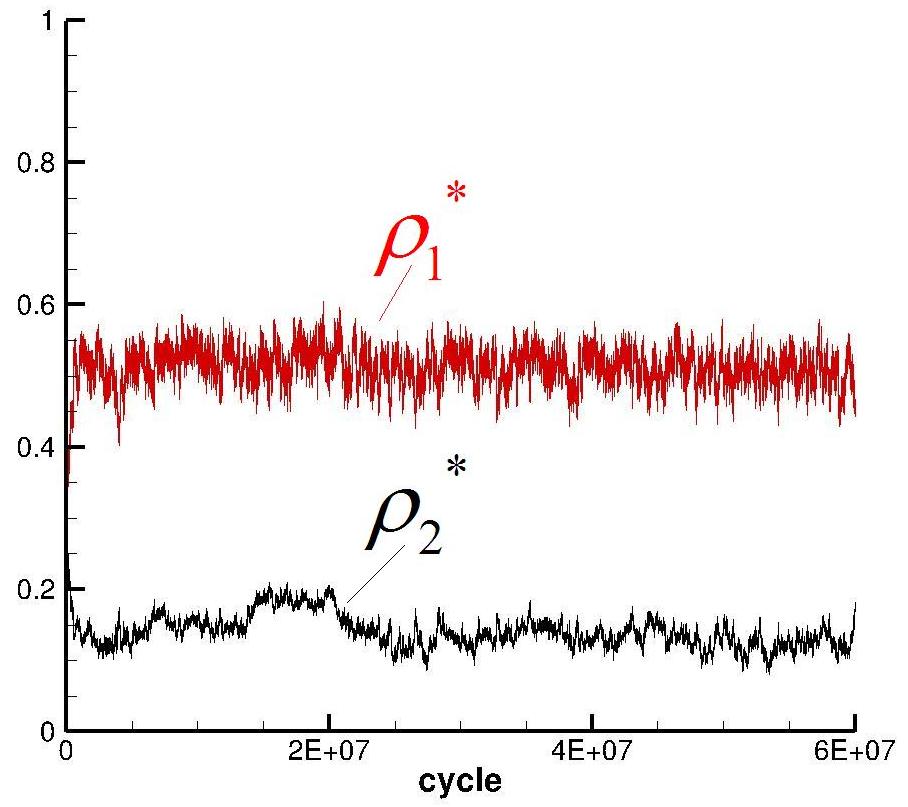}\\
\includegraphics[width=0.5\textwidth]{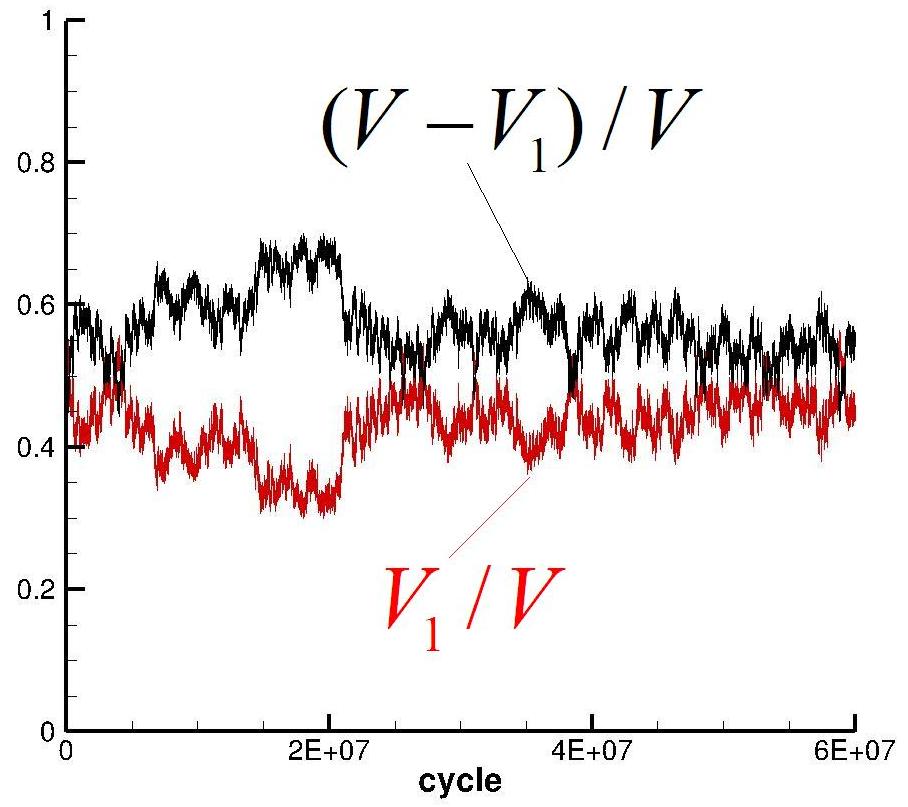}\\
\includegraphics[width=0.5\textwidth]{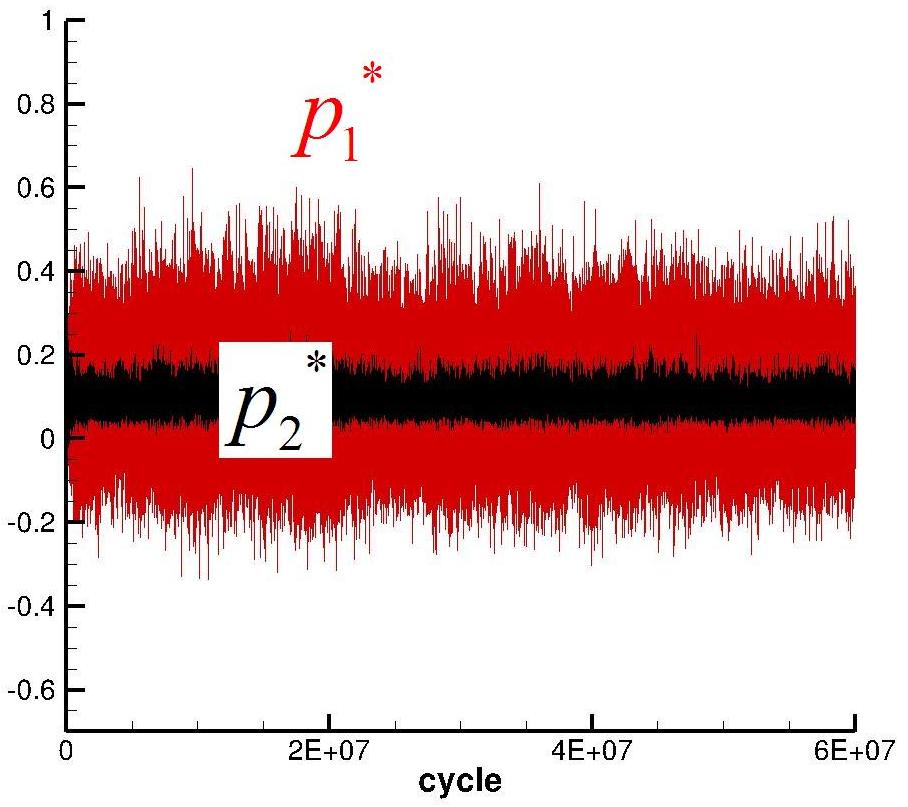}
\caption{Evolution of normalized densities (top), volumes (middle) and pressures (bottom), $T^*=1.25$.}
\label{fig:T=1.25}
\end{figure}

As shown in Fig.~\ref{fig:T=0.9}, the statistical noise of the simulation results is larger in the dense-phase box than in the lower-density box. A similar observation is made in~\cite{Frenkel2002}. For example, the simulation results of $T^*=1.25$ with the same initial density value of 0.3 are shown in Fig.~\ref{fig:T=1.25}, where we observe that the intensity difference of statistical noise of the two phases is reduced by decreasing the density difference.

\section{Blocking Method for Estimating the Variance}\label{s:blocking method}
In the following discussion, we replace $A$ of Eq.~\eqref{eq:<A>=} by $x$, as used in the blocking method described in~\cite{Flyvbjerg1989} to represent the sampled quantities of interest, including pressure, number density, volume, and total potential energy of each box. In MCMC simulations, each sample $x_i$ is a measurement of a random variable $x$ with an exact but unknown probability distribution, from which we define the expected value $\left<x\right>$. We use the average value $\overline x=\dfrac{1}{n}\sum_{i=1}^{n}x_i$ to estimate $\left<x\right>$. This estimation is then unbiased as $\left<\overline x\right>=\left<x\right>$. If the measurements can be taken as independent, the variance $\sigma^2(\overline x)$ of the estimation using $\overline x$ is inversely proportional to the size $n$ of the sample set. But, if they are correlated, the variance then also depends on the sampling interval $d$ between two successive samples.

In the blocking method, the following transformation is employed to decrease the sample size till $n'=2$
\begin{equation}\label{eq:blocking step}
    \begin{cases}x_i'=(x_{2i-1}+x_{2i})/2, \\
    n'=n/2.
    \end{cases}
\end{equation}
After each blocking step, we get a new value for
\begin{equation}\label{eq:c0/(n-1)=}
    \dfrac{c_0'}{n'-1}=\dfrac{1}{(n'-1)n'}\sum_{i=1}^{n'}(x_i'-\overline {x'})^2,
\end{equation}
which increases during the blocking process and approximates $\sigma^2(\overline x)$ if convergence is achieved. $\dfrac{c_0'}{n'-1}$ denotes the value we compute in practice. The value at the convergence point is used to estimate the variance of the average value and this estimation is unbiased~\cite{Flyvbjerg1989}. If the blocking process does not converge, the largest value during the blocking process is a lower bound of the variance~\cite{Flyvbjerg1989}. Convergence happens if the sample set covers a span which is several times larger than the maximal correlation interval $\tau$, so that the ``blocking'' variables $x_i'$ at the convergence point are independent Gaussian variables. The subtlety of the blocking method is to decrease the correlation degree of the new sample set $(x_i')_{i=1,\cdots,n'}$ making the correlated functions $\gamma_{i,j}'\equiv\left<x_i' x_j'\right>-\Big<x_i'\Big>\left<x_j'\right>,i\ne j$ tend to zero.

The definition of $\sigma^2(\overline x)$ is given in Eq.~\eqref{eq:variance=} using the correlation function $\gamma_{i,j}$. An alternative scheme to estimate $\sigma^2(\overline x)$ is to directly select an estimator for $\gamma_{i,j}$. This selection needs to be done carefully since the most obvious estimator for $\gamma_{i,j}$ is a biased one, as its expected value is not exactly equal to $\gamma_{i,j}$~\cite{Flyvbjerg1989}. As shown in~\cite{Flyvbjerg1989}, the estimator of $\sigma^2(\overline x)$ using $\dfrac{c_0'}{n'-1}$ is unbiased since the expected value of $\dfrac{c_0'}{n'-1}$ at the convergence point is equal to $\sigma^2(\overline x)$. Additionally, the blocking method is more efficient than many other estimators of $\sigma^2(\overline x)$~\cite{Flyvbjerg1989}.

\section{Influence of Simulation Parameters on the Variance}\label{s:variance observation}

\begin{figure}
\centering
\includegraphics[width=0.49\textwidth]{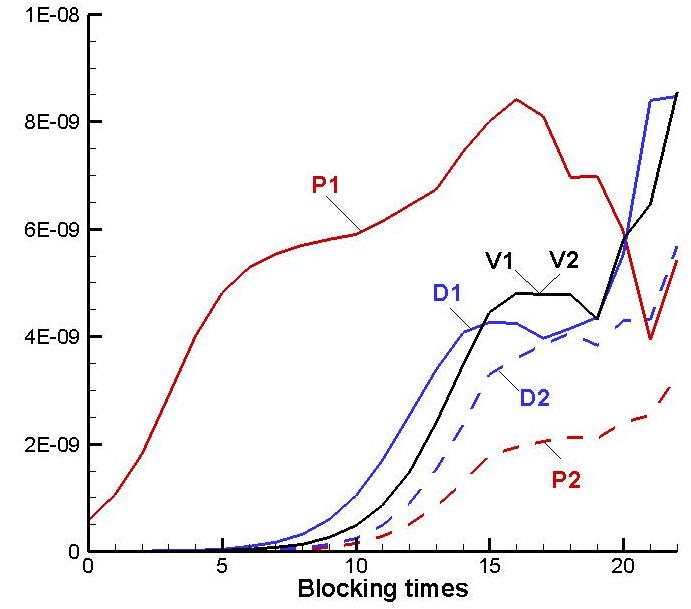}
\includegraphics[width=0.49\textwidth]{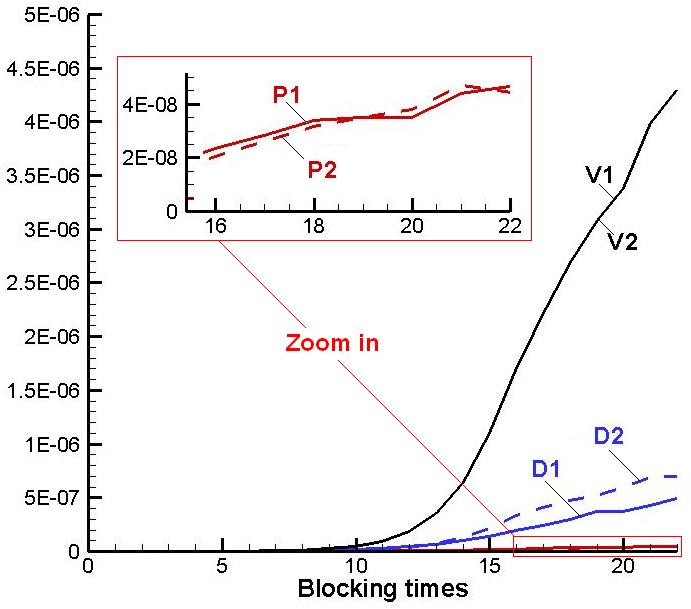}
\caption{Variance estimates by the blocking method,
  $T^*=0.9$ (left) and $T^*=1.25$ (right), $n=2^{25}$ and $d=50$, P$1$ and P$2$ are the variances of average pressures in
  boxes~1 and~2, respectively, D$1$ and D$2$ are the variances of average densities
  in boxes~1 and~2, respectively, and V$1$ and V$2$ are the variances of average volumes in
  boxes~1 and~2, respectively.}
\label{fig:variance,T=0.9}
\end{figure}

We take the set of samples after each cycle as the full sample set. If the trial move at the current cycle is accepted, the current sample is different from the previous one. If the trial move is instead rejected, the configuration remains unchanged and the current sample is the same as the previous one. The repeated samples induce a high correlation degree in the sample set, and  are reasonable from a statistical point of view. Unfortunately, they only contain little useful information. The lower the correlation degree is, the smaller the variance with a given sample size will be. Instead of sampling after each cycle, we could, for example, add a sample to the set after each $d$ cycles. The new sample set will be referred to as coarse sample set, which is a subset of the full sample set. We can reduce the correlation degree of the coarse sample set by increasing $d$. The total number of samples in this coarse set is denoted by $n$. In MCMC simulations, only the coarse sample set is stored, and the memory or disk usage can be reduced significantly by having $d$ be much larger than one. The average value and the corresponding variance are calculated using the coarse sample set.

In the above simulation of a L-J fluid with $T^*=0.9$ in Fig.~\ref{fig:T=0.9}, we observed more statistical noise in box $1$, which has the denser phase. Fig.~\ref{fig:variance,T=0.9} (left) also shows that the variances, estimated by the blocking method, of the normalized density and  pressure of box $1$, are larger than those of box 2 (the final wild fluctuation is due to numerical instabilities when $n'$ becomes very small). Their volume variances are the same since the total volume $V$ is fixed, which is consistent with the data shown in Fig.~\ref{fig:T=0.9}. The relative differences in variance of the number density and the pressure between the two boxes are reduced in Fig.~\ref{fig:variance,T=0.9} (right) compared to these in Fig.~\ref{fig:variance,T=0.9} (left) due to the increase of $T^*$, which is consistent with the comparison between Fig.~\ref{fig:T=1.25} and Fig.~\ref{fig:T=0.9}.

Now, we discuss the variances of the simulation results of $T^*=1.25$. Fig.~\ref{fig:T=1.25} implies that $\sigma^2(\rho_1)$ is larger than $\sigma^2(\rho_2)$ but Fig.~\ref{fig:variance,T=0.9} (right) shows that the variance $\sigma^2(\overline\rho_1)$ of $\overline\rho_1$ of the dense phase is smaller than $\sigma^2(\overline\rho_2)$, which is different from the observation of $T^*=0.9$ where the dense phase has a larger variance. Fig.~\ref{fig:variance,T=0.9} (right) also shows that $\sigma^2(\overline p_1)$ is smaller than $\sigma^2(\overline\rho_1)$, although $\sigma^2(p_1)$ is larger than $\sigma^2(\rho_1)$ as shown in Fig.~\ref{fig:T=1.25} where the variation of $p_1$ ranges from about -0.2 to 0.4, and $\rho_1$ varies from about 0.45 to 0.55. Eq.~\eqref{eq:variance=} gives the definition of the variance $\sigma^2(\overline x)$ of $\overline x$ as a summation of the correlation functions $\gamma_{i,j}$, which can be replaced by $\gamma_t$ where $t=|i-j|$ is the interval between the two samples of $x_i$ and $x_j$. Although $\gamma_0(p_1)>\gamma_0(\rho_1)$, as $\gamma_0(p_1)=\sigma^2(p_1)$ and $\gamma_0(\rho_1)=\sigma^2(\rho_1)$, the decay speed of $\gamma_t(p_1)$ with the increase of $t$ is much faster than that of $\gamma_t(\rho_1)$, as shown in Fig.~\ref{fig:T=1.25}, where $\rho_1$ has periodic fluctuations of scales larger than those observed in the fluctuations of $p_1$. Thus, $\sigma^2(\overline p_1)$ can be smaller than $\sigma^2(\overline\rho_1)$ even though $\sigma^2(p_1)>\sigma^2(\rho_1)$ according to Eq.~\eqref{eq:variance=}. A similar interpretation applies to the observation of $\sigma^2(\overline\rho_1)<\sigma^2(\overline\rho_2)$.

The CPU time is proportional to the total cycle times $L_{\rm total}$, which is almost equal to $n\times d$ ($L_{\rm total}=L_{\rm init}+n\times d$, but the cycle times $L_{\rm init}$ before convergence is negligible). We discuss the influence of $n$ and $d$ on the variance in what follows. The rules that we obtain are expected to be independent of the particular MCMC simulation used to generate the correlated sample set. Therefore, an ideal system, which is simpler than the L-J system and makes the simulation more efficient, is used in the following simulations.

\begin{figure}
  \centering
  \includegraphics[width=0.49\textwidth]{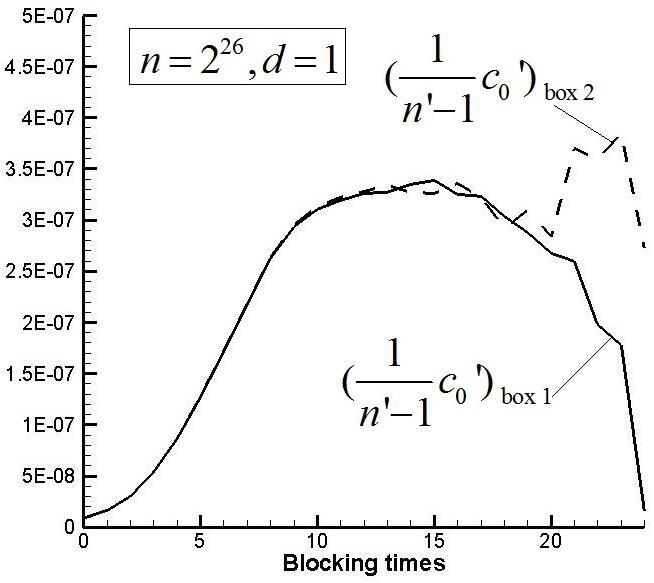}
  \includegraphics[width=0.49\textwidth]{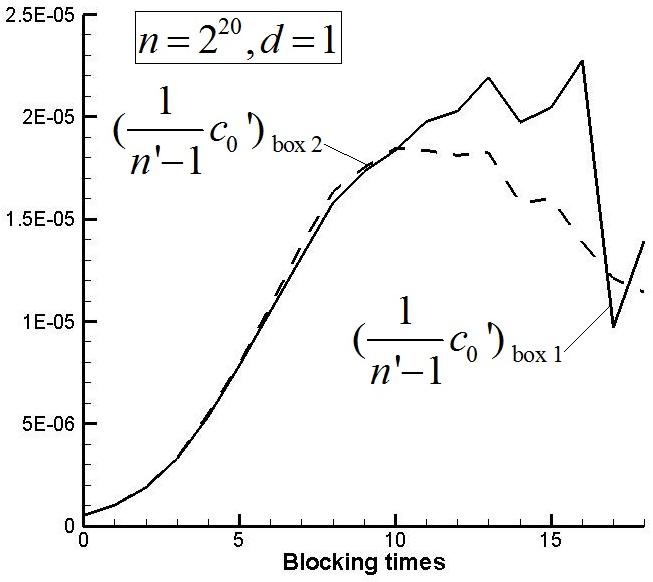}
  \caption{Blocking process of energy sample sets from the two boxes of the ideal system.}
  \label{fig:Ising variance}
\end{figure}

In the ideal system, the total particle number $N$ is equal to $12$, and particle coordinates only take integral numbers $s_i=\pm 1$ as in the Ising model. The probability distribution function becomes
\begin{equation}\label{eq:f(Ising)=}
    f(N_1, \vec S_1, \vec S_2)\propto\dfrac{\exp\left[-(U_1+U_2)\right]}{N_1!(N-N_1)!}
\end{equation}
where the total potential energy $U_1=-J\sum_{i=1,\cdots,N_1 \atop i<j\leq N_1}s_is_j$ is a summation over all pairs (the periodic images of the particles are neglected here) located inside box 1 (the same for $U_2$), and $J=0.1$, so that the acceptance ratio is not too small. For this model, we only need the trial move of particle swap and the spin trial move which randomly selects a particle and changes the sign of $s_i$. The properties of the two boxes are equivalent, thus the correlation degree of their sample sets is the same. We use the energy samples of $U_1$ and $U_2$ as $x_i$ to compute $\dfrac{c_0'}{n'-1}$ using Eq.~\eqref{eq:c0/(n-1)=}. The MCMC simulation results show that the blocking processes of the sample sets of $U_1$ and $U_2$ are very similar as shown in Fig.~\ref{fig:Ising variance}. During the blocking process, only the evolution of $\dfrac{c_0'}{n'-1}$ at the initial stage provides useful information relevant to the approaching process to the variance. As $n'$ shrinks, the evolution of the blocking process becomes unstable, leading to wild fluctuations that can be arbitrarily large, either increasing or decreasing the computed value. These oscillations are due to numerical instabilities and only serve to bound the trustworthy region of the blocking computation. These instabilities do not cause any problem if $\dfrac{c_0'}{n'-1}$ converges before losing stability, as seen in Fig.~\ref{fig:Ising variance} (left). Thus, the value at the convergence point can be used to estimate the corresponding variance. But, in some cases where $\dfrac{c_0'}{n'-1}$ cannot converge before losing stability (see Fig.~\ref{fig:Ising variance}~(right)), it is difficult to judge where the separation point of the two stages is located. That is, the lower bound of the variance, namely the largest value before losing stability, is unknown. When using two sample sets with the same correlation degree, their initial stages should be the same while their final stages are random, which makes the separation point of the two stages easy to find. As shown in Fig.~\ref{fig:Ising variance}~(left) using $n=2^{26}$ samples, the two curves overlap with each other and deviate after blocking 14 times which is the separation point. As it converges before the separation point, the variance for these two sample sets is about $3.3\times10^{-7}$. In Fig.~\ref{fig:Ising variance}~(right), while using $2^{20}$ samples, the two curves overlap with each other before blocking 10 times (the separation point) but are still not converged. This gives us the lower bound of the variance, which is about $1.84\times10^{-5}$. This is the largest value achieved before losing stability. Using different sample sets with similar correlation degrees simplifies the computation of the lower bound of the variance. Nevertheless, in real MCMC simulations, this would incur in prohibitive computational demands in terms of memory usage and CPU time. Thus, as shown in Fig.~\ref{fig:Ising variance}, we propose to use the first maximal point in the blocking process as the separation point and to estimate the lower bound of the variance. This observation is justified by the fact that $\dfrac{c_0'}{n'-1}$ is a theoretically non-decreasing quantity, while the oscillations shown in Figs.~\ref{fig:variance,T=0.9} and~\ref{fig:Ising variance} can be justified by the loss of stability in the blocking computation.

\begin{table}
\centering
\caption{Variance of Markov Chain Monte Carlo simulation results with different sample size $n$ and sampling interval $d$.}\label{tab:variance(n,d)}
\begin{tabular}{ccccc}
\hline
$$       & $n=2^{26}$          & $n=2^{24}$          & $n=2^{22}$          & $n=2^{20}$          \\
\hline
$d=2^0 $   & $3.3\times 10^{-7}$ & $1.3\times 10^{-6}$ & Not converged        & Not converged   \\
$d=2^2 $   & $8.2\times 10^{-8}$ & $3.3\times 10^{-7}$ & $1.4\times 10^{-6}$ & Not converged   \\
$d=2^4$    & $2.3\times 10^{-8}$ & $9.1\times 10^{-8}$ & $3.6\times 10^{-7}$ & $1.5\times 10^{-6}$ \\
$d=2^6$    &          /                      & $4.3\times 10^{-8}$ & $1.7\times 10^{-7}$ & $6.8\times 10^{-7}$ \\
\hline
\end{tabular}

\small{*Note: ``Not converged'' refers to simulations where the blocking process becomes unstable before achieving a definite maximum, as shown in Fig.~\ref{fig:Ising variance}~(right).}
\end{table}

Table~\ref{tab:variance(n,d)} displays the variance for different combinations of $n$ and $d$. The rows of table~\ref{tab:variance(n,d)} correspond to a fixed sampling interval, $d$, which implies that the correlation degree for the coarse sample set is also fixed. As observed in table~\ref{tab:variance(n,d)}, for fixed $d$, the variance is almost inversely proportional to the sample size $n$. This feature is well known for independent sample sets but deserves further theoretical analysis for a general sample set. The CPU time is almost proportional to $n\times d$ as mentioned before. Thus, the variance with fixed $d$ is also almost inversely proportional to the CPU time and so, it is the most rewarding choice for reducing the variance to increase $n$ in view of CPU time. Using $V(n,d)$ as the variance at $(n,d)$, we observed in table~\ref{tab:variance(n,d)} that:
\begin{equation}\label{eq:obser1}
    \dfrac{V(n_2,d)}{V(n_1,d)}=\dfrac{n_1}{n_2}
\end{equation}
Although increasing $n$ is an efficient choice for reducing the variance in view of CPU time, it has an onerous cost for memory or disk usage. For $d=1$, the variance becomes very small only if $n$ is very large which makes the memory requirement unacceptable. In order to reduce the variance while keeping the memory or disk usage low, we decrease the correlation degree of the coarse sample set by increasing $d$. For $n=2^{24}$, the variance decreases to about a quarter of the previous value when $d$ increases from one to four, which is almost as efficient as increasing $n$ in view of CPU time, also increased four times. But, when $d$ increases from 16 to 64 with the CPU time being increased four times again, variance is reduced to 0.47 times the previous value, instead of 0.25 times as the ideal value, from $9.1\times 10^{-8}$ to $4.3\times 10^{-8}$. This is wasteful with regard to CPU time, because we can choose to increase $n$ from $2^{24}$ to $2^{26}$ while fixing $d$ at 16, with CPU time increasing by four times too, but with the variance decreasing to about 0.25 times the previous value ($9.1\times 10^{-8}$ to $2.3\times 10^{-8}$), as already pointed out above. The following theoretical analysis can further prove that
\begin{equation}\label{eq:obser2}
    \dfrac{d_1}{d_2}<\dfrac{V(n,d_2)}{V(n,d_1)}\leq 1,
\end{equation}
where $d_2>d_1$. The equality holds when the samples of the coarse sample set of $d_1$ are already independent and so the correlation degree of the coarse sample set cannot be further reduced by increasing $d$.

Usually, we also want to know how to reduce the variance for a given CPU time, namely $n\times d$. In the case of small $d$, the magnitude of the variance is more dependent on the CPU time. The larger the CPU time is, the smaller the variance will be. Given the same CPU time, the larger the sample size $n$ is, the smaller the variance will be. In the case of large $d$ though, the magnitude of the variance depends more on the sample size $n$ and in the limit case, becomes independent of $d$. In fact, these rules are nothing new compared with Eqs.~\eqref{eq:obser1}-\eqref{eq:obser2}, from which we have that
\begin{equation}\label{eq:obser3}
   \dfrac{V(n_2,d_2)}{V(n_1,d_1)}=\dfrac{n_1V(n_2,d_2)}{n_2V(n_2,d_1)},
\end{equation}
such that
\begin{equation}\label{eq:obser4}
    \dfrac{n_1d_1}{n_2d_2}<\dfrac{V(n_2,d_2)}{V(n_1,d_1)}\leq \dfrac{n_1}{n_2}.
\end{equation}
For $n_1\times d_1=n_2\times d_2$ corresponding to the same CPU time, Eq.~\eqref{eq:obser4} can be replaced by a special form using a new variable $V'({\rm CPU time}, d)$, such that
\begin{equation}\label{eq:obser5}
    1<\dfrac{V'({\rm CPU time}, d_2)}{V'({\rm CPU time}, d_1)}\leq \dfrac{n_1}{n_2},
\end{equation}
where $d_2>d_1$, as required in Eq.~\eqref{eq:obser2}.

The maximal correlation interval $\tau$ of the full sample set can be estimated by the sampling interval $d$ and the blocking times before convergence, because the 'blocking' variables $x_i'$ at the convergence point are independent Gaussian variables~\cite{Flyvbjerg1989}. For the case of Fig.~\ref{fig:Ising variance} (left) with $n=2^{26}$ and $d=1$, the blocking process converges after blocking about 11 times. The estimation of $\tau$ is thus $2^{11}\times d=2^{11}$. In table~\ref{tab:tau(n,d)}, we present the estimates of $\tau$ for different sampling intervals $d$ and sample sizes $n$, reported in table~\ref{tab:variance(n,d)}. The data shows that when the blocking processes converge, different $n$ and $d$ lead to similar estimates of $\tau$, with a value close to $2^{11}$. This is to be expected as we are using different $d$ and $n$ to sample the same random experiment, where the correlation degree of the full sample set is fixed. For the variance analysis of table~\ref{tab:variance(n,d)}, all coarse sample sets satisfy the conditions of $d\ll\tau$ and $nd\gg\tau$, which are required in the following theoretical analysis of the relationship between variance and the sampling parameters $d$ and $n$.

\begin{table}
\caption{Blocking times before convergence used by different coarse sample sets from the same random experiment}\label{tab:tau(n,d)}
\centering
\begin{tabular}{ccccc}
\hline
$$              & $n=2^{26}$ & $n=2^{24}$ & $n=2^{22}$      & $n=2^{20}$          \\
\hline
$d=2^0 $   & $11$            & $11$           & Not converged & Not converged   \\
$d=2^2 $   & $9 $             & $8 $           & $9$                   & Not converged   \\
$d=2^4$    & $8 $             & $7 $           & $7$                   & $7$                 \\
$d=2^6$    &  /                  & $5 $           & $5$                   & $5$                 \\
\hline
\end{tabular}
\end{table}

\section{Theoretical Analysis}\label{s:theoretical analysis}

In section~\ref{s:variance observation}, we describe some empirical rules between the variance and the sample size $n$ and sampling interval $d$, namely Eqs.~\eqref{eq:obser1}-\eqref{eq:obser2}. These rules are independent of the blocking method used to calculate the variance and reflect the underlying feature of the statistical rules, which are independent of the Monte Carlo methods used to generate the correlated samples. The theoretical analysis in this section justifies these rules.

\begin{figure}
\includegraphics[width=1.0\textwidth]{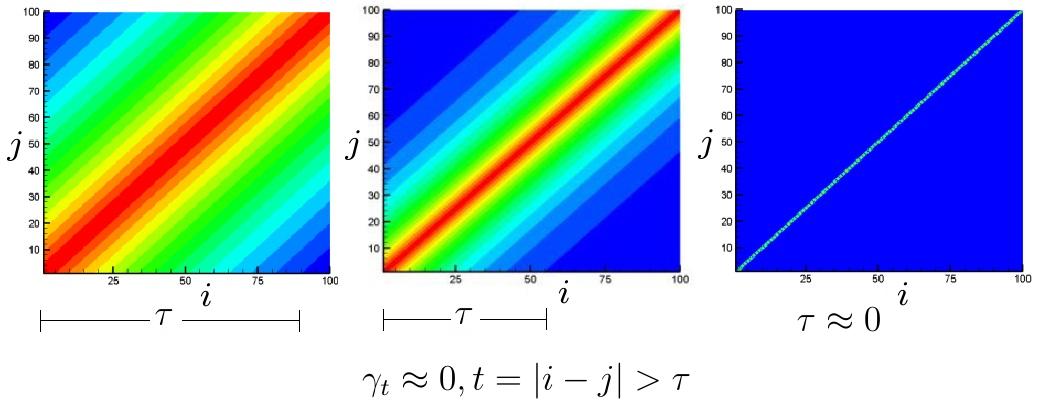}
\caption{$\gamma_t$ of representative sample sets with different correlation degrees.}
\label{fig:correlation degree}
\end{figure}

Let $x_1,x_2,\cdots,x_n$ be the full sample set of a random variable $x$ in a MCMC simulation at thermal equilibrium, which has the following features~\cite{Flyvbjerg1989}:
\begin{align}
  \label{eq:<feature>}
  \begin{cases}
    \left<x_i\right>=\left<x_{i+t}\right>, & \forall t \\
    \left<x_ix_j\right>-\left<x_i\right>\left<x_j\right>
    =\left<x_lx_m\right>-\left<x_l\right>\left<x_m\right>, & |i-j|=|l-m|
  \end{cases}
\end{align}
where $\left<\cdots\right>$ denotes the expected value with respect to these exact but unknown probability distributions. In MCMC simulations, we estimate the expected value $\left<x\right>$ by the average quantity $\overline x=\dfrac{1}{n}\sum_{i=1}^{n}x_i$. The variance of $\overline x$ is~\cite{Flyvbjerg1989}:
\begin{equation}\label{eq:variance=}
    \sigma^2(\overline x)=\left<\overline x^2\right>-\left<\overline x\right>^2=\dfrac{1}{n^2}\sum_{i,j=1}^n\gamma_{i,j}=\dfrac{1}{n}\left[\gamma_0+2\sum_{t=1}^{n-1}\left(1-\dfrac{t}{n}\right)\gamma_t\right],
\end{equation}
where $\gamma_{i,j}=\left<x_ix_j\right>-\left<x_i\right>\left<x_j\right>$ and $\gamma_t\equiv\gamma_{i,j},t=|i-j|$. The variance $\sigma^2(\overline x)$ of the average value $\overline x$ differs from the variance $\sigma^2(x)$ of the random variable $x$ itself. We have $\sigma^2(x)=\gamma_{i,i}=\gamma_0$, which is a fixed value for a given random variable $x$, with a fixed probability distribution, while $\sigma^2(\overline x)$ depends on $n$ and $d$. We define the maximal correlation interval $\tau$ for the full sample set as $\gamma_t\approx0$, where $t>\tau$ (see Fig.~\ref{fig:correlation degree}). In MCMC simulations, it is reasonable to assume that
\begin{equation}\label{eq:assumption}
    \dfrac{1}{n^2}\sum_{i,j=1 \atop i\ne j}^n\gamma_{i,j}=\sigma^2(\overline x)-\dfrac{1}{n}\sigma^2(x)\geq 0,
\end{equation}
where the equality holds when the samples are independent of each other. Fig.~\ref{fig:correlation degree} shows some representative results of $\gamma_{i,j}$ in usual MCMC simulations. Fig.~\ref{fig:correlation degree}~(left) shows the results of a high-correlation sample set compared to Fig.~\ref{fig:correlation degree} (middle). In the limit case where all samples are independent, $\gamma_{i,j}$ is equal to a constant $\sigma^2(x)$ for $i=j$ and zero otherwise, as shown in Fig.~\ref{fig:correlation degree}~(right). We use these schematic models only to show the contour distributions, the monotone interval and the location of maximal value. These models make it easy to understand the following linear interpolation scheme.

\begin{theorem}
In MCMC simulations, the correlation degree of the full sample set is given, thus $\tau$ is fixed. For a general coarse sample set with fixed $d$, the variance $\sigma^2(\overline x)$ is inversely proportional to $n$ if $n\times d\gg\tau$.
\end{theorem}
\begin{proof}
We introduce
$Y_b=\left\{y_{b,i}\left|y_{b,i}=x_{(i-1)d+1},i=1,2,\cdots,n\right.\right\}$
containing $n$ samples generated once from each $d$ cycles. For an arbitrary $d$, Eq.~\eqref{eq:variance=} is modified to define the variance $\sigma^2(\overline y_b)$ as
\begin{equation}\label{eq:proof1-1}
\begin{aligned}
    \sigma^2(\overline y_b)&=\dfrac{1}{n^2}\sum_{i=(k_i-1)d+1 \atop {j=(k_j-1)d+1 \atop k_i,k_j=1,\cdots,n}}\gamma_{i,j}\\
    &=\dfrac{1}{n}\left[\gamma_0+2\sum_{t=1}^{n-1}\left(1-\dfrac{t}{n}\right)\gamma_{td}\right]\\
    &\approx\dfrac{1}{n}\left[\gamma_0+2\sum_{t=1}^{\tau/d}\left(1-\dfrac{t}{n}\right)\gamma_{td}\right]
\end{aligned}
\end{equation}
Assuming $nd\gg\tau$, making $\dfrac{\tau/d}{n}\ll1$, we conclude that
\begin{equation}\label{eq:proof1-2}
    \sigma^2(\overline y_b)\approx\dfrac{1}{n}\left[\gamma_0+2\sum_{t=1}^{\tau/d}\left(1-\dfrac{t}{n}\right)\gamma_{td}\right]
    \approx\dfrac{1}{n}\left[\gamma_0+2\sum_{t=1}^{\tau/d}\gamma_{td}\right]
\end{equation}
which is inversely proportional to $n$ and consistent with Eq.~\eqref{eq:obser1}.
\end{proof}
\newtheorem*{myremark}{Remark 5.1}
\begin{myremark}
The relationship between the variance and the sample size in Theorem 5.1 is well-known for an independent sample set but holds for a correlated sample set only if $nd\gg\tau$. This requirement is satisfied in the data presented in tables~\ref{tab:variance(n,d)} and \ref{tab:tau(n,d)}. If $d\gg\tau$, making the samples in $Y_b$ independent, $\sigma^2(\overline y_b)$ is always inversely proportional to $n$ even if $n$ is small, which can be understood from the definition of Eq.~\eqref{eq:variance=} where $\gamma_t\equiv0, t>0$, for independent sample sets.
\end{myremark}

\begin{theorem}
Given two sample sets with the same sample size $n$ but different sampling intervals $d_1$ and $d_2$ ($d_2>d_1$), respectively. If $nd_1\gg\tau$, $nd_2\gg\tau$ and $d_1, d_2\ll\tau$, their variances satisfy \[\dfrac{d_1}{d_2}<\dfrac{\sigma^2\left(n,d_2\right)}{\sigma^2\left(n,d_1\right)}\leq1.\]
\end{theorem}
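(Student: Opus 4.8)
The plan is to reduce both variances to the integrated-autocorrelation form of Theorem 5.1 and then compare them term by term. Let $\gamma_s$ denote the autocovariance of the underlying per-cycle chain at lag $s$; by the stationarity feature~\eqref{eq:<feature>}, the coarse sample set obtained with interval $d$ has lag-$t$ autocovariance equal to $\gamma_{td}$ and lag-$0$ autocovariance $\gamma_0$ regardless of $d$. Substituting into Eq.~\eqref{eq:variance=} and invoking $n\gg\tau$ exactly as in the proof of Theorem 5.1 (to discard the $1-t/n$ weight and truncate the tail where $\gamma_{td}\approx0$), I would write
\begin{equation*}
  \sigma^2(n,d)\approx\frac{1}{n}\left[\gamma_0+2\sum_{t=1}^{\infty}\gamma_{td}\right]=:\frac{1}{n}\left[\gamma_0+2S(d)\right].
\end{equation*}
The whole theorem then hinges on one structural fact about $S(d)$.

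The key step is to show that $d\,S(d)$ is essentially independent of $d$ when $d\ll\tau$. Here I would read $S(d)=\sum_{t=1}^{\infty}\gamma_{td}$ as a Riemann sum, with mesh $d$, of the slowly varying profile $s\mapsto\gamma_s$: because $\gamma_s$ changes appreciably only over the scale $\tau\gg d$, the sum has $\approx\tau/d\gg1$ significant terms and
\begin{equation*}
  d\,S(d)=d\sum_{t=1}^{\infty}\gamma_{td}\approx\int_{0}^{\infty}\gamma_s\,ds=:C,
\end{equation*}
a constant fixed by the underlying correlation structure and independent of the sampling interval. The positivity assumption~\eqref{eq:assumption} guarantees $C\ge0$, and for a genuinely correlated set $C>0$. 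Thus $S(d_1)\approx C/d_1$ and $S(d_2)\approx C/d_2$, giving
\begin{equation*}
  \frac{\sigma^2(n,d_2)}{\sigma^2(n,d_1)}\approx\frac{\gamma_0+2C/d_2}{\gamma_0+2C/d_1}.
\end{equation*}

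With this closed form the two inequalities are pure algebra. For the upper bound, $d_2>d_1$ and $C\ge0$ force $C/d_2\le C/d_1$, so the numerator does not exceed the denominator and the ratio is $\le1$, with equality precisely when $C=0$, that is, when the $d_1$-set is already uncorrelated. For the lower bound I would clear denominators and compute
\begin{equation*}
  d_2\bigl(\gamma_0+2C/d_2\bigr)-d_1\bigl(\gamma_0+2C/d_1\bigr)=\gamma_0(d_2-d_1)>0,
\end{equation*}
which rearranges to $\sigma^2(n,d_2)/\sigma^2(n,d_1)>d_1/d_2$; the strictness comes entirely from $\gamma_0=\sigma^2(x)>0$.

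I expect the Riemann-sum step to be the main obstacle: it is where the hypothesis $d\ll\tau$ does its real work, and it must be argued (as the paper does schematically in Fig.~\ref{fig:correlation degree}) that the per-cycle autocovariance is a smooth, slowly decaying profile, so that coarsening the mesh from $d_1$ to $d_2$ merely subsamples the same area $C$ rather than changing it. The boundary correction from starting the sum at $t=1$ (omitting the lag window $[0,d)$) is of order $d\,\gamma_0$ and is absorbed into the separately retained $\gamma_0$ term, so it perturbs neither inequality to leading order.
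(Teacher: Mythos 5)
Your reduction is genuinely different from the paper's: you collapse the double sum in Eq.~\eqref{eq:variance=} to the one-dimensional integrated autocovariance $S(d)=\sum_{t\ge1}\gamma_{td}$ and compare the two intervals through a single constant $C=\int_0^\infty\gamma_s\,ds$, whereas the paper compares two-dimensional lattice sums, splitting the $(i,j)$ plane into diagonal (``blue'') $d\times d$ blocks treated \emph{exactly} and monotone (``red'') blocks treated by linear interpolation, then routing the conclusion through an auxiliary interval-one set $Y_c$ and Theorem~5.1. Your upper bound $\sigma^2(n,d_2)/\sigma^2(n,d_1)\le1$, together with its equality case $C=0$, is sound and matches Eq.~\eqref{eq:proof2-7}. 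The gap is in the strict lower bound: the positive term $\gamma_0(d_2-d_1)$ from which you derive strictness is exactly the size of the Riemann boundary correction you dismiss, and in fact is cancelled by it. By the trapezoid (Euler--Maclaurin) rule for a smooth decaying profile, $d\,S(d)=C-\frac{d}{2}\gamma_0+O(d^2\gamma_0/\tau)$, so that $\gamma_0+2S(d)=2C/d+O(d\gamma_0/\tau)$: the separately retained $\gamma_0$ cancels against the half-weight boundary term, and $d\bigl[\gamma_0+2S(d)\bigr]$ is constant to your working order, which makes your difference computation spurious. A concrete test: for the triangular profile $\gamma_s=\gamma_0(1-s/\tau)$ for $s\le\tau$ (zero beyond), with $d$ dividing $\tau$, one finds exactly $\gamma_0+2S(d)=\gamma_0\tau/d$, so your closed form gives $\sigma^2(n,d_2)/\sigma^2(n,d_1)=d_1/d_2$ --- equality, not strict inequality.

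This is not a cosmetic issue, because the bound $d_1/d_2$ is asymptotically sharp in the hypothesis regime $d_1,d_2\ll\tau$ (cf.\ Remark~5.2: for highly correlated sets the ratio converges to $1/d$), so any proof of the strict ``$<$'' must carry exactly the subleading terms you discard. The paper's strictness lives in the diagonal blocks of Eq.~\eqref{eq:proof2-2}: there $\sum_{i,j\in\Omega_{\rm blue}}\gamma_{i,j}<d^2\gamma_0$ holds strictly because only $d$ of the $d^2$ entries attain the maximal value $\gamma_0$, while interpolation is invoked only in the monotone region away from the diagonal; in the triangular test case this chain indeed yields a strictly positive (though $O(\gamma_0 d^2)$, hence relatively $O(d/(n\tau))$, small) margin, consistent with the exact sums. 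To repair your argument, treat the short-lag window $[0,d)$ (equivalently, the diagonal blocks) exactly instead of absorbing it into the $\gamma_0$ term, and only then apply the Riemann-sum approximation to the remaining monotone tail; your algebraic skeleton survives, but the strict inequality must be extracted from that exactly-kept piece rather than from the artifact $\gamma_0(d_2-d_1)$.
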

\begin{proof}
We first discuss two sample sets:
$Y_a=\left\{y_{a,i}\left|y_{a,i}=x_i,i=1,2,\cdots,nd\right.\right\}$ containing $n\times
d$ samples and
$Y_b=\left\{y_{b,i}\left|y_{b,i}=x_{(i-1)d+1},i=1,2,\cdots,n\right.\right\}$
containing $n$ samples generated once from each $d$ samples of
$Y_a$. From Eq.~\eqref{eq:variance=}, we have:
\begin{equation}\label{eq:proof2-1}
    \begin{cases}\sigma^2(\overline y_a)=\dfrac{1}{(nd)^2}\sum_{i,j=1}^{nd}\gamma_{i,j} \\
    \sigma^2(\overline y_b)=\dfrac{1}{n^2}\sum_{i=(k_i-1)d+1 \atop {j=(k_j-1)d+1 \atop k_i,k_j=1,\cdots,n}}\gamma_{i,j}
    \end{cases}
\end{equation}
As shown in Fig.~\ref{fig:summation model}, $\sum_{i,j=1}^{nd}$ is a summation over all vertexes (without repeating) of small black quadrilaterals, but $\sum_{i=(k_i-1)d+1 \atop {j=(k_j-1)d+1 \atop k_i,k_j=1,\cdots,n}}$ is a summation over only the bottom-left vertexes of larger quadrilaterals, which are marked by red and blue colors and have indexes $k_i, k_j\in[1, n]$.

\begin{figure}
\centering
  \includegraphics[width=0.49\textwidth]{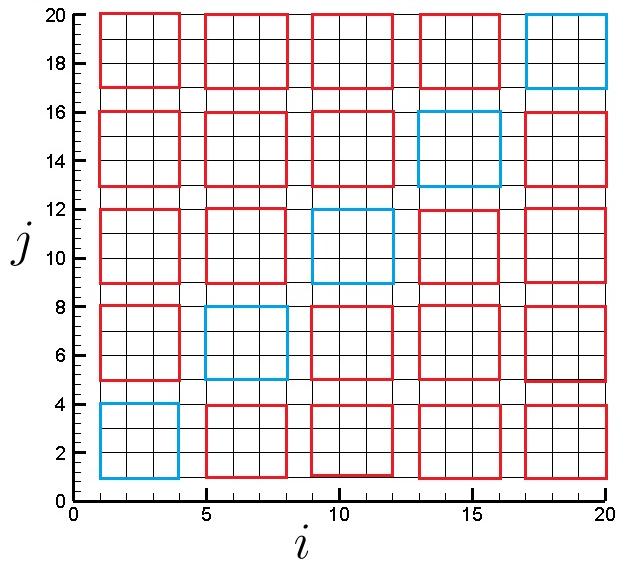}
  \includegraphics[width=0.49\textwidth]{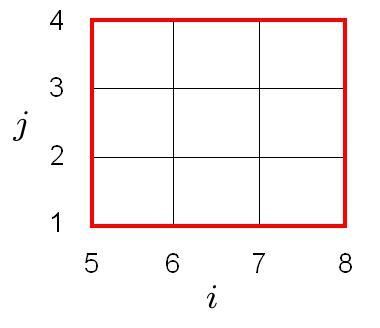}
  \caption{Schematic model for the summations with $d=4$ (left) and linear interpolation model (right).}
\label{fig:summation model}
\end{figure}

In the area of each blue quadrilateral, $\Omega_{\rm blue}$, centered at the maximal value of $\gamma_t$ (see Fig.~\ref{fig:summation model}), it can be observed that
\begin{equation}\label{eq:proof2-2}
    d\sum_{i=(k_i-1)d+1 \atop {j=(k_j-1)d+1 \atop k_i,k_j\in\Omega_{\rm blue}}}\gamma_{i,j}\leq\sum_{i,j\in\Omega_{\rm blue}}\gamma_{i,j}
    <d^2\sum_{i=(k_i-1)d+1 \atop {j=(k_j-1)d+1 \atop k_i,k_j\in\Omega_{\rm blue}}}\gamma_{i,j},
\end{equation}
where the equality holds when $\gamma_t\equiv0,\ t>0$. This can be understood by considering one of the blue quadrilaterals in Fig.~\ref{fig:summation model} (left), while realizing that the leftmost summation, $\sum_{i=(k_i-1)d+1 \atop {j=(k_j-1)d+1 \atop k_i,k_j\in\Omega_{\rm blue}}}\gamma_{i,j}$, only contains the bottom-left corner of the blue quadrilateral, namely a maximal value which lies on the diagonal. Multiplying this maximal value by $d$ will be lower or equal to $\sum_{i,j\in\Omega_{\rm blue}}\gamma_{i,j}$, having $d$ maximal terms and other terms with smaller but still positive values. The second part of the inequality stems from the fact that $d^2$ is multiplying one maximal term, and this will always be greater than $\sum_{i,j\in\Omega_{\rm blue}}\gamma_{i,j}$, having $d^2$ terms, with only $d$ terms taking maximal values.

In the area $\Omega_{\rm red}$ of those red quadrilaterals located always at the monotone interval of $\gamma_t$, we assume $d$ is much smaller than $\tau$, and thus the linear interpolation is valid in each small local red area of size $d$. For the representative red quadrilateral shown in Fig.~\ref{fig:summation model} (right) with $k_i=2$ and $k_j=1$, we have $\gamma_{5,1}=\gamma_{6,2}=\gamma_{7,3}=\gamma_{8,4}$. According to linear interpolation, we get $\gamma_{6,1}+\gamma_{5,2}=\gamma_{7,2}+\gamma_{6,3}=\gamma_{8,3}+\gamma_{7,4}\approx2\gamma_{5,1}$, $\gamma_{7,1}+\gamma_{5,3}=\gamma_{8,2}+\gamma_{6,4}\approx2\gamma_{5,1}$ and $\gamma_{8,1}+\gamma_{5,4}\approx2\gamma_{5,1}$. Thus, we have the following estimate
\begin{equation}\label{eq:proof2-3a}
    \sum_{i=5,\cdots,8\atop j=1,\cdots,4}\gamma_{i,j}\approx4^2\sum_{i=(k_i-1)4+1\atop{j=(k_j-1)4+1\atop k_i=2, k_j=1}}\gamma_{i,j}= 4^2\gamma_{5,1}.
\end{equation}
Generally, the following approximation for any arbitrary red quadrilateral is valid:
\begin{equation}\label{eq:proof2-3}
    \sum_{i,j\in\Omega_{\rm red}}\gamma_{i,j}\approx d^2\sum_{i=(k_i-1)d+1 \atop {j=(k_j-1)d+1 \atop k_i,k_j\in\Omega_{\rm red}}}\gamma_{i,j}.
\end{equation}
According to Eqs.~\eqref{eq:proof2-2}-\eqref{eq:proof2-3}, we have
\begin{align}\label{eq:proof2-4}
    d\sum_{i=(k_i-1)d+1 \atop {j=(k_j-1)d+1 \atop k_i,k_j\in\Omega_{\rm blue,all}}}\gamma_{i,j}+
    d^2\sum_{i=(k_i-1)d+1 \atop {j=(k_j-1)d+1 \atop k_i,k_j\in\Omega_{\rm red,all}}}\gamma_{i,j}\leq
    \sum_{i,j=1}^{nd}\gamma_{i,j}<d^2\sum_{i=(k_i-1)d+1 \atop {j=(k_j-1)d+1 \atop k_i,k_j=1,\cdots,n}}\gamma_{i,j}.
\end{align}
At this point, we assume Eq.~\eqref{eq:assumption} is valid and apply it to the sample set $Y_b$. Considering that $d>1$ and $d^2>d$, we get
\begin{equation}\label{eq:proof2-5}
    d\sum_{i=(k_i-1)d+1 \atop {j=(k_j-1)d+1 \atop k_i,k_j=1,\cdots,n}}\gamma_{i,j}\leq
    \sum_{i,j=1}^{nd}\gamma_{i,j}<d^2\sum_{i=(k_i-1)d+1 \atop {j=(k_j-1)d+1 \atop k_i,k_j=1,\cdots,n}}\gamma_{i,j},
\end{equation}
where the equality holds when $\gamma_t\equiv0,\ t>0$. Substituting Eq.~\eqref{eq:proof2-5} into Eq.~\eqref{eq:proof2-1}, we have
\begin{equation}\label{eq:proof2-6}
    \dfrac{\sigma^2\left(\overline y_b\right)}{d}\leq\sigma^2(\overline y_a)<\sigma^2(\overline y_b).
\end{equation}
We introduce $Y_c=\{y_{c,i}|y_{c,i}=x_i,i=1,2,\cdots,n\}$ which contains $n$ samples as $Y_b$ and has the same correlation degree as $Y_a$. Since $n\gg\tau$ and $nd\gg\tau$ according to the assumption, the conclusion of Eq. \eqref{eq:proof1-2} implies that
\begin{equation}\label{eq:proof2-add}
    \dfrac{\sigma^2\left(\overline y_c\right)}{\sigma^2\left(\overline y_a\right)}=\dfrac{nd}{n}=d.
\end{equation}
Substituting Eq. \eqref{eq:proof2-add} in Eq. \eqref{eq:proof2-6}, we get
\begin{equation}\label{eq:proof2-7}
    \dfrac{1}{d}<\dfrac{\sigma^2\left(\overline
      y_b\right)}{\sigma^2\left(\overline y_c\right)}\leq1,
\end{equation}
with which the proof is complete.
\end{proof}
\newtheorem*{myremarkb}{Remark 5.2}
\begin{myremarkb}
Taking the sample set with $d_1$ in Eq.~\eqref{eq:obser2} as $Y_c$ and
the other as $Y_b$, we observe that Eq.~\eqref{eq:obser2} is
equivalent to Eq.~\eqref{eq:proof2-7} proved here. If the
sample set $Y_a$ (namely $Y_c$) has a high correlation degree, the summation over area $\Omega_{\rm red}$ is dominant (see
Fig.~\ref{fig:correlation degree}~(left)) and \(\dfrac{\sigma^2\left(\overline y_b\right)}{\sigma^2\left(\overline y_c\right)}\) converges to $\dfrac{1}{d}$ according to Eq.~\eqref{eq:proof2-3} which implies $\sigma^2\left(\overline y_a\right)=\sigma^2\left(\overline y_b\right)$. In contrast, $\dfrac{\sigma^2(\overline y_b)}{\sigma^2(\overline y_c)}=1$ if the samples in $Y_a$ are independent.

The assumptions of theorem 5.2 are that $nd_1\gg\tau$, $nd_2\gg\tau$ and $d_1, d_2\ll\tau$, which are satisfied in the data shown in tables~\ref{tab:variance(n,d)} and \ref{tab:tau(n,d)}. In real applications, $nd$ should be much larger than $\tau$ since otherwise the variance of the average value is very high, which makes the average value not trustworthy. For the selection of $d$, we suggest to let $d$ be much larger than 1 to reduce memory usage. In addition, we also suggest to let $d$ be much smaller than $\tau$ as otherwise this leads to loss of too much correlated information that can still reduce the variance effectively. The two necessary assumptions can thus be easily satisfied in real applications.
\end{myremarkb}

\section{Conclusions}\label{s:conc}
The influence of the sample size $n$ and sampling interval $d$ used in MCMC simulations on the variance of the average quantities is analyzed using numerical results and proved theoretically. If $n\times d$ is much larger than the maximal correlation interval $\tau$ of the full sample set, the variance of the estimation using a coarse sample set with fixed $d$ is inversely proportional to $n$ and the CPU time. For a given CPU time, the memory or disk usage (namely the sample size) can be reduced greatly by increasing $d$, while getting a negligible increase in variance if the original $d$ is very small.

In the implementation of the blocking method, the blocking process is subject to increased fluctuations when the sample size $n'$ is reduced; in particular, the fluctuation gets worse when $n'$ approaches two. The current results show that the fluctuation starts near the first maximal point obtained during the blocking process. Additionally, the corresponding maximal value can be used as an estimate of the variance if the blocking process converges, or as a lower bound estimate of the variance if the blocking process does not converge.

\section*{Acknowledgments}
This work was supported in part by the King Abdullah University of Science and Technology (KAUST) Center for Numerical Porous Media.  In addition, S. Sun would also like to acknowledge the support of this study by a research award from King Abdulaziz City for Science and Technology (KACST) through a project entitled "Study of Sulfur Solubility using Thermodynamics Model and Quantum ChemistryÓ. 


\end{document}